\def\Oh{\mathcal{O}}
\def\X{\mathsf{X}}
\def\LP{\mbox{\rm {\sf L}}}
\def\Q{\mbox{\rm {\sf Q}}}
\def\R{\mbox{\rm {\sf R}}}
\def\Y{\mathsf{Y}}
\def\LPF{\mbox{\rm {\sf LPF}}}
\def\LZ{\mbox{\rm {\sf LZ}}}
\begin{document}
\title{Lempel-Ziv Decoding in External Memory\thanks{This research is
    partially supported by Academy of Finland through grant 258308 and
    grant 250345 (CoECGR).}}

\author{
  Djamal Belazzougui\inst{1}
  \and
  Juha K{\"a}rkk{\"a}inen\inst{2}
  \and
  Dominik Kempa\inst{2}
  \and\\
  Simon J. Puglisi\inst{2}
}

\authorrunning{D. Belazzougui et al.}

\institute{
  CERIST, Algeria\\
  \email{dbelazzougui@cerist.dz}\\[0.1cm]
  \and
  Helsinki Institute for Information Technology HIIT, \\
  Department of Computer Science, University of Helsinki, Finland\\
  \email{\{juha.karkkainen,dominik.kempa,simon.puglisi\}@cs.helsinki.fi}\\
  \vspace{-0.2cm}
}

\date{}

\maketitle \thispagestyle{empty}

\begin{abstract}
  \normalsize Simple and fast decoding is one of the main advantages
  of LZ77-type text encoding used in many popular file compressors
  such as {\em gzip} and {\em 7zip}. With the recent introduction of
  external memory algorithms for Lempel--Ziv factorization there is a
  need for external memory LZ77 decoding but the standard algorithm
  makes random accesses to the text and cannot be trivially modified
  for external memory computation. We describe the first external
  memory algorithms for LZ77 decoding, prove that their I/O complexity
  is optimal, and demonstrate that they are very fast in practice,
  only about three times slower than in-memory decoding (when reading
  input and writing output is included in the time).
\end{abstract}

\section{Introduction}
\label{sec-intro}

The Lempel--Ziv (LZ) factorization~\cite{LZ76} is a partitioning of a
text string into a minimal number of phrases consisting of substrings
with an earlier occurrence in the string and of single characters. In
LZ77 encoding~\cite{ZL77} the repeated phrases are replaced by a
pointer to an earlier occurrence (called the source of the phrase). It
is a fundamental tool for data
compression~\cite{fm2010,RLZspire2010,GG10,kn2010} and today it lies
at the heart of popular file compressors (e.g. {\em gzip} and {\em
  7zip}), and information retrieval systems~(see,
e.g.,~\cite{fm2010,hpz2011}). Recently the factorization has become
the basis for several compressed full-text
self-indexes~\cite{kn2011,ggp2011,ggknp2012,fghp2013}.  Outside of
compression, LZ factorization is a widely used algorithmic tool in
string processing: the factorization lays bare the repetitive
structure of a string, and this can be used to design efficient
algorithms~\cite{bct2012,kk1999,kk2003,kbk2003}.

One of the main advantages of LZ77 encoding as a compression technique
is a fast and simple decoding: simply replace each pointer to a source
by a copy of the source. However, this requires a random access to the
earlier part of the text. Thus the recent introduction of external
memory algorithms for LZ77 factorization~\cite{KKP2014-EMLZ} raises
the question: Is fast LZ77 decoding possible when the text length
exceeds the RAM size? In this paper we answer the question positively
by describing the first external memory algorithms for LZ77 decoding.

In LZ77 compression, the need for external memory algorithms can be
avoided by using an encoding window of limited size. However, a longer
encoding window can improve the compression ratio~\cite{fm2010}. Even
with a limited window size, decompression on a machine with a small
RAM may require an external memory algorithm if the compression was
done on a machine with a large RAM. Furthermore, in applications such
as text indexing and string processing limiting the window size is not
allowed. While most of these applications do not require decoding, a
fast decoding algorithm is still useful for checking the correctness
of the factorization.

\paragraph{Our contribution.} 
We show that in the standard external memory model~\cite{Vit2006} the
I/O complexity of decoding an LZ77-like encoding of a string of length
$n$ over an alphabet of size $\sigma$ is $\Theta\left( \frac{n}{B
    \log_\sigma n} \log_{M/B} \frac{n}{B \log_\sigma n} \right)$,
where $M$ is the RAM size and $B$ is the disk block size in units of
$\Theta(\log n)$ bits.  The lower bound is shown by a reduction from
permuting and the upper bound by describing two algorithms with this
I/O complexity.

The first algorithm uses the powerful tools of external memory sorting
and priority queues while the second one relies on plain disk I/O
only. Both algorithms are relatively simple and easy to implement.
Our implementation uses the STXXL library~\cite{STXXL} for sorting and
priority queues.

Our experiments show that both algorithms scale well for large data
but the second algorithm is much faster in all cases. This shows that,
while external memory sorting and priority queues are extremely useful
tools, they do have a significant overhead when their full power is
not needed.  The faster algorithm (using a very modest amount of RAM)
is only 3--4 times slower than an in-memory algorithm that has enough
RAM to perform the decoding in RAM (but has to read the input from
disk and write the output to disk).

Our algorithms do not need a huge amount of disk space in addition to
the input (factorization) and output (text), but we also describe and
implement a version, which can reduce the additional disk space to
less than 3\% of total disk space usage essentially with no effect on
runtime.

\section{Basic Definitions}
\label{sec-preliminaries}

\paragraph{Strings.}
Throughout we consider a string $\X = \X[1..n] = \X[1]\X[2]\ldots
\X[n]$ of $|\X| = n$ symbols drawn from the alphabet $[0..\sigma-1]$
for $\sigma = n^{\Oh(1)}$.  For $1 \leq i \leq j \leq n$ we write
$\X[i..j]$ to denote the {\em substring} $\X[i]\X[i+1]\ldots \X[j]$ of
$\X$.  By $\X[i..j)$ we denote $\X[i..j-1]$.

\paragraph{LZ77.}
The {\em longest previous factor} (LPF) at position $i$ in string $\X$
is a pair $\LPF[i]=(p_i,\ell_i)$ such that, $p_i < i$,
$\X[p_i..p_i+\ell_i) = \X[i..i+\ell_i)$, and $\ell_i$ is maximized.
In other words, $\X[i..i+\ell_i)$ is the longest prefix of $\X[i..n]$
which also occurs at some position $p_i < i$ in $\X$.  There may be
more than one potential value of $p_i$, and we do not care which one
is used.

The LZ77 factorization (or LZ77 parsing) of a string $\X$ is a greedy,
left-to-right parsing of $\X$ into longest previous factors. More
precisely, if the $j$th LZ factor (or {\em phrase}) in the parsing is
to start at position $i$, then $\LZ[j] = \LPF[i] = (p_i,\ell_i)$ (to
represent the $j$th phrase), and then the $(j+1)$th phrase starts at
position $i+\ell_i$. The exception is the case $\ell_i=0$, which
happens iff $\X[i]$ is the leftmost occurrence of a symbol in $\X$. In
this case $\LZ[j] = (\X[i],0)$ (to represent $\X[i..i]$) and the next
phrase starts at position $i+1$.  This is called a \emph{literal
  phrase} and the other phrases are called \emph{repeat phrases}.  For
a repeat phrases, the substring $\X[p_i..p_i+\ell_i)$ is called the
{\em source} of the phrase $\X[i..i+\ell_i)$.  We denote the number of
phrases in the LZ77 parsing of $\X$ by $z$.

\paragraph{LZ77-type factorization.}
There are many variations of LZ77 parsing. For example, the original
LZ77 encoding~\cite{ZL77} had only one type of phrase, a (potentially
empty) repeat phrase always followed by a literal character. Many
compressors use parsing strategies that differ from the greedy
strategy described above to optimize compression ratio after entropy
compression or to speed up compression or decompression.  The
algorithms described in this paper can be easily adapted for most of
them.  For purposes of presentation and analysis we make two
assumptions about the parsing:
\begin{itemize}
\item All phrases are either literal or repeat phrases as described
  above.
\item The total number of repeat phrases, denoted by
  $z_{\mathrm{rep}}$, is $\Oh(n/\log_\sigma n)$.
\end{itemize}
We call this an \emph{LZ77-type factorization}. The second assumption
holds for the greedy factorization~\cite{LZ76} and means that the
total size of the repeat phrases encoded using $\Oh(\log n)$-bit
integers is $\Oh(n\log\sigma)$. If furthermore the zero length in the
literal phrases is encoded with $\Oh(\log\sigma)$ bits, the size of
the whole encoding is $\Oh(n\log\sigma)$ bits.

\section{On I/O complexity of LZ decoding}

Given an LZ77-type factorization of a string encoded as described
above, the task of LZ77 decoding is to recover the original string. In
this section, we obtain a lower bound on the I/O complexity of LZ
decoding by a reduction from permuting.

We do the analysis using the standard external memory (EM)
model~\cite{Vit2006} with RAM size $M$ and disk block size $B$, both
measured in units of $\Theta(\log n)$ bits. We are primarily
interested in the I/O complexity, i.e., the number of disk blocks
moved between RAM and disk.

Given a sequence $\bar{x}=x_1, x_2, \dots, x_n$ of $n$ objects of size
$\Theta(\log n)$ bits each and a permutation $\pi[1..n]$ of $[1..n]$,
the task of permuting is to obtain the permuted sequence $\bar{y} =
y_1, y_2, \dots, y_n = x_{\pi[1]}, x_{\pi[2]}, \dots,
x_{\pi[n]}$. Under the mild assumption that $B\log (M/B) = \Omega(\log
(n/B))$, the I/O complexity of permuting is $\Theta\left( \frac{n}{B}
  \log_{M/B}\frac{n}{B} \right)$, the same as the I/O complexity of
sorting~\cite{AV88}.

We show now that permuting can be reduced to LZ decoding.  Let $\X$ be
the string obtained from the sequence $\bar{x}$ by encoding each $x_i$
as a string of length $h=\Theta(\log_\sigma n)$ over the alphabet
$[0..\sigma)$.  Let $\Y$ be the string obtained in the same way from
the sequence $\bar{y}$.  Form an LZ77-type factorization of $\X\Y$ by
encoding the first half using literal phrases and the second half
using repeat phrases so that the substring representing $y_i$ is
encoded by the phrase $(h\pi[i]+1-h,h)$. This LZ factorization is easy
to construct in $\Oh(n/B)$ I/Os given $\bar{x}$ and $\pi$. By decoding
the factorization we obtain $\X\Y$ and thus $\bar{y}$.

\begin{theorem}
  The I/O complexity of decoding an LZ77-type factorization of a
  string of length $n$ over an alphabet of size $\sigma$ is
  $\Omega\left(\frac{n}{B\log_\sigma n} \log_{M/B}
    \frac{n}{B\log_\sigma n}\right)$.
\end{theorem}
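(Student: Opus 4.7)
The plan is to formalize the reduction from permuting to LZ decoding that is sketched in the preceding paragraph, and then transfer the Aggarwal--Vitter permuting lower bound through it. First I would verify that the construction of the LZ77-type factorization from $(\bar{x},\pi)$ is legal under the paper's assumptions. Let $N = 2nh = \Theta(n \log_\sigma n)$ be the length of the decoded string $\X\Y$. The $n$ repeat phrases $(h\pi[i]+1-h, h)$ that cover $\Y$ contribute $z_{\mathrm{rep}} = n = \Theta(N/\log_\sigma N)$, which matches the $\Oh(N/\log_\sigma N)$ budget required by the LZ77-type definition; the first half $\X$ is covered by literal phrases, which are unconstrained in number. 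So the instance truly lies in the class whose decoding complexity we want to lower-bound.

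Second, I would bound the I/O cost of the reduction itself. Encoding $\bar{x}$ as $\X$ and emitting the phrase sequence from $\pi$ is a streaming transformation touching each input object a constant number of times, costing $\Oh(n/B) = \Oh(N/(B\log_\sigma N))$ I/Os. Extracting $\bar{y}$ from the decoded output $\X\Y$ is likewise a single scan. Both are asymptotically dominated by the permuting lower bound $\Omega((n/B)\log_{M/B}(n/B))$ of Aggarwal and Vitter, so any LZ decoder using $o\!\left(\frac{N}{B\log_\sigma N}\log_{M/B}\frac{N}{B\log_\sigma N}\right)$ I/Os would yield a permuting algorithm breaking that bound, a contradiction.

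Third, I would perform the variable substitution. In the Aggarwal--Vitter bound the parameter is the number of objects being permuted; in our theorem it is the length of the decoded string, which I have been calling $N$ to keep the two roles apart. Substituting the number of objects $n = \Theta(N/\log_\sigma N)$ into the permuting lower bound yields exactly the claimed expression, after which I rename $N$ back to $n$ to match the statement of the theorem.

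The main obstacle is bookkeeping rather than conceptual: I must check that the ``mild assumption'' $B\log(M/B) = \Omega(\log(n/B))$ under which the permuting bound is stated still holds after the substitution $n \mapsto N/\log_\sigma N$, and confirm that the reduction overhead truly is dominated in every parameter regime appearing in the lower-bound expression, including the degenerate case where the outer logarithm collapses to $\Theta(1)$ and one must argue the bound directly from the trivial scan cost $\Omega(N/(B\log_\sigma N))$.
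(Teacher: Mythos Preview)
Your proposal is correct and follows exactly the paper's approach: the reduction from permuting described just before the theorem is invoked, and the lower bound is obtained by substituting the object count $\Theta(n/\log_\sigma n)$ into the Aggarwal--Vitter bound. The paper's own proof is a single sentence pointing to that reduction, so your write-up simply supplies the verification steps (legality of the factorization, negligible reduction overhead, variable renaming, and the parameter-regime caveats) that the paper leaves implicit.
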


\begin{proof}
  The result follows by the above reduction from permuting a sequence
  of $\Theta(n/\log_\sigma n)$ objects.  \qed
\end{proof}

\section{LZ decoding using EM sorting and priority queue}
\label{sec:decode-using-sort-and-pq}

Our first algorithm for LZ decoding relies on the powerful tools of
external memory sorting and external memory priority queues.

We divide the string $\X$ into $\lceil n/b\rceil$ segments of size
exactly $b$ (except the last segment can be smaller). The segments
must be small enough to fit in RAM and big enough to fill at least one
disk block. If a phrase or its source overlaps a segment boundary, the
phrase is split so that all phrases and their sources are completely
inside one segment.  The number of phrases increases by at most
$\Oh(z_{\mathrm{rep}}+n/b)$ because of the splitting.

After splitting, the phrases are divided into three sequences.  The
sequence $\R_{\mathrm{far}}$ contains repeat phrases with the source
more than $b$ positions before the phrase (called far repeat phrases)
and the sequence $\R_{\mathrm{near}}$ the other repeat phrases (called
near repeat phrases). The sequence $\LP$ contains all the literal
phrases. The repeat phrases are represented by triples $(p,q,\ell)$,
where $p$ is the starting position of the source, $q$ is the starting
position of the phrase and $\ell$ is the length.  The literal phrases
are represented by pairs $(q,c)$, where $q$ is the phrase position and
$c$ is the character.  The sequence $\R_{\mathrm{far}}$ of far repeat
phrases is sorted by the source position.  The other two sequences are
not sorted, i.e., they remain ordered by the phrase position.

During the computation, we maintain an external memory priority queue
$\Q$ that stores already recovered far repeat phrases. Each such
phrase is represented by a triple $(q,\ell,s)$, where $q$ and $\ell$
are as above and $s$ is the phrase as a literal string. The triples
are extracted from the queue in the ascending order of $q$. The
maximum length of phrases stored in the queue is bounded by a
parameter $\ell_{\max}$. Longer phrases are split into multiple
phrases before inserting them into the queue.

The string $\X$ is recovered one segment at a time in left-to-right
order and each segment is recovered one phrase at a time in
left-to-right order. A segment recovery is done in a (RAM) array
$\Y[0..b)$ of size $b$. At any moment in time, for some $i\in[0..b]$,
$\Y[0..i)$ contains the already recovered prefix of the current
segment and $\Y[i..b)$ contains the last $b-i$ characters of the
preceding segment. The next phrase starting at $\Y[i]$ is recovered in
one of three ways depending on its type:
\begin{itemize}
\item A literal phrase is obtained as the next phrase in the sequence
  $\LP$.
\item A near repeat phrase is obtained as the next phrase in the
  sequence $\R_{\mathrm{near}}$. The source of the phrase either
  starts in $\Y[0..i)$ or is contained in $\Y[i..b)$, and is easily
  recovered in both cases.
\item A far repeat phrase is obtained from the the priority queue with
  the full literal representation.
\end{itemize}

Once a segment has been fully recovered, we read all the phrases in
the sequence $\R_{\mathrm{far}}$ having the source within the current
segment.  Since $\R_{\mathrm{far}}$ is ordered by the source position,
this involves a single sequential scan of $\R_{\mathrm{far}}$ over the
whole algorithm. Each such phrase is inserted into the priority queue
$\Q$ with its literal representation (splitting the phrase into
multiple phrases if necessary).

\begin{theorem}
  A string of length $n$ over an alphabet of size $\sigma$ can be
  recovered from its LZ77 factorization in $\Oh\left(
    \frac{n}{B\log_\sigma n} \log_{M/B}\frac{n}{B\log_\sigma n}
  \right)$ I/Os.
\end{theorem}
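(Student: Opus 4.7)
The plan is to bound the I/O cost of each stage of the algorithm separately and show that each stage fits within the claimed bound. The key numerical facts to keep in mind are that the encoded factorization and the packed output each have $\Oh(n\log\sigma)$ bits, i.e.\ $\Oh(n/(B\log_\sigma n))$ disk blocks, and that $z_{\mathrm{rep}} = \Oh(n/\log_\sigma n)$ by the LZ77-type assumption.

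First, I would fix the parameters. Choose the segment size $b = \Theta(M)$ so that a segment buffer $\Y[0..b)$ fits in RAM and $b = \Omega(B\log n)$, which ensures the splitting at segment boundaries adds at most $\Oh(n/b) = \Oh(n/\log_\sigma n)$ new phrases and therefore preserves the bound on $z_{\mathrm{rep}}$. Choose $\ell_{\max} = \Theta(B\log_\sigma n)$ so that each priority queue item $(q,\ell,s)$, whose size is dominated by $\ell\log\sigma$ bits of literal content, fits into $\Oh(1)$ blocks; then the cap introduces at most $\Oh(n/\ell_{\max})$ extra items, which is again within $\Oh(n/\log_\sigma n)$.

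Second, I would account for the preprocessing and the non-priority-queue I/O. Producing $\LP$, $\R_{\mathrm{near}}$, and $\R_{\mathrm{far}}$ from the input factorization and performing the splittings is one sequential scan, costing $\Oh(n/(B\log_\sigma n))$ I/Os. Sorting $\R_{\mathrm{far}}$ by source position has $\Oh(n/\log_\sigma n)$ records and so costs $\Oh\!\left(\frac{n}{B\log_\sigma n}\log_{M/B}\frac{n}{B\log_\sigma n}\right)$ I/Os by the standard EM sorting bound, matching the target. The main loop sequentially writes the recovered segments of $\X$ in packed form, sequentially reads $\LP$, $\R_{\mathrm{near}}$, and $\R_{\mathrm{far}}$ exactly once each; all of these together are $\Oh(n/(B\log_\sigma n))$ I/Os.

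Third, I would bound the priority queue. By the parameter choices above, the total number of items ever inserted into $\Q$ is $\Oh(n/\log_\sigma n)$, and the total bit-volume is $\Oh(n\log\sigma)$, which is $\Oh(n/\log_\sigma n)$ words. An EM priority queue supports insert/extract-min in amortized $\Oh\!\left(\frac{1}{B}\log_{M/B}\frac{N}{B}\right)$ I/Os per word on a stream of $N$ words, so the total priority queue cost is $\Oh\!\left(\frac{n}{B\log_\sigma n}\log_{M/B}\frac{n}{B\log_\sigma n}\right)$. Summing the three contributions yields the stated bound.

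The main obstacle is controlling the priority queue volume and item size simultaneously: without the $\ell_{\max}$ cap, a single long far repeat phrase could force an item larger than a block, invalidating the per-word priority queue bound; without the $z_{\mathrm{rep}}$ assumption, the triple overhead alone could dominate. The two parameter choices $b = \Theta(M)$ and $\ell_{\max} = \Theta(B\log_\sigma n)$ precisely balance segment splitting, item count, and item size so that neither the sort, the priority queue, nor the streaming I/O exceeds the target. Everything else in the argument reduces to standard EM sorting and priority queue complexity bounds applied to $\Oh(n/\log_\sigma n)$ items.
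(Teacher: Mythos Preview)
Your decomposition into preprocessing, sorting, sequential scanning, and priority-queue cost is exactly the paper's structure, and the accounting of phrase counts and bit volume is sound. The difference lies purely in the parameter choices.

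The paper takes $b=\Theta(B\log_\sigma n)$ and, more importantly, $\ell_{\max}=\Theta(\log_\sigma n)$, so that every priority-queue record $(q,\ell,s)$ occupies $\Oh(\log n + \ell_{\max}\log\sigma)=\Oh(\log n)$ bits, i.e.\ $\Oh(1)$ words. With $\Oh(n/\log_\sigma n)$ such unit-size records, the standard external-memory priority-queue bound of $\Oh\!\bigl(\tfrac{1}{B}\log_{M/B}\tfrac{N}{B}\bigr)$ amortized I/Os \emph{per operation} applies immediately and yields the target. Your choice $\ell_{\max}=\Theta(B\log_\sigma n)$ instead produces items of up to $\Theta(B)$ words each, and you then invoke a ``per word'' priority-queue bound. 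That statement is not how the cited result is formulated: the classical bounds are per insertion/deletion on constant-word items, not per word of payload. One can recover your claim, but only by splitting each large item into $\Oh(1)$-word pieces keyed by $(q,\text{index})$ --- which is precisely what the paper's choice of $\ell_{\max}=\Theta(\log_\sigma n)$ already does. So your argument is not wrong, but the step ``$\Oh(1/B)\log_{M/B}(N/B)$ I/Os per word'' hides a reduction that is cleaner to make explicit via the smaller $\ell_{\max}$.

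Your segment size $b=\Theta(M)$ is larger than the paper's $\Theta(B\log_\sigma n)$; both work, and yours only reduces the amount of splitting, so no harm there. One small imprecision: the splitting at segment boundaries adds $\Oh(z_{\mathrm{rep}}+n/b)$ phrases (sources can straddle boundaries too), not just $\Oh(n/b)$, but since $z_{\mathrm{rep}}=\Oh(n/\log_\sigma n)$ this does not affect the final count.
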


\begin{proof}

  We set $\ell_{\max}=\Theta(\log_\sigma n)$ and $b=\Theta(B
  \log_\sigma n)$. Then the objects stored in the priority queue need
  $\Oh(\log n + \ell_{\max} \log\sigma) = \Oh(\log n)$\linebreak bits
  each and the total number of repeat phrases after all splitting is
  $\Oh(z_{\mathrm{rep}} + n/\log_\sigma n) = \Oh(n/\log_\sigma n)$.
  Thus sorting the phrases needs\linebreak $\Oh\left(
    \frac{n}{B\log_\sigma n} \log_{M/B}\frac{n}{B\log_\sigma n}
  \right)$ I/Os. This is also the I/O complexity of all the external
  memory priority queue operations~\cite{BK98}.  All other processing
  is sequential and needs $\Oh\left( \frac{n}{B\log_\sigma n}\right)$
  I/Os.  \qed

\end{proof}

We have implemented the algorithm using the STXXL library~\cite{STXXL}
for external memory sorting and priority queues.

\section{LZ decoding without sorting or priority queue}
\label{sec:decode-using-plain-disk-io}

The practical performance of the algorithm in the previous section is
often bounded by in-memory computation rather than I/O, at least on a
machine with relatively fast disks. In this section, we describe an
algorithm that reduces computation based on the observation that we do
not really need the full power of external memory sorting and priority
queues.

To get rid of sorting, we replace the sorted sequence
$\R_{\mathrm{far}}$ with $\lceil n/b \rceil$ unsorted sequences $\R_1,
\R_2, \dots$, where $\R_i$ contains all phrases with the source in the
$i$th segment. In other words, sorting $\R_{\mathrm{far}}$ is replaced
with distributing the phrases into the segments $\R_1, \R_2,
\dots$. If $n/b$ is less than $M/B$, the distribution can be done in
one pass, since we only need one RAM buffer of size $B$ for each
segment. Otherwise, we group $M/B$ consecutive segments into a
supersegment, distribute the phrases first into supersegments, and
then scanning the supersegment sequences into segments. If necessary,
further layers can be added to the segment hierarchy. This operation
generates the same amount of I/O as sorting the phrases but requires
less computation because the segment sequences do not need to be
sorted.

In the same way, the priority queue is replaced with $\lceil n/b
\rceil$ simple queues.  The queue $\Q_i$ contains a triple
$(q,\ell,s)$ for each far repeat phrase whose phrase position is
within the $i$th segment.  The order of the phrases in the queue is
arbitrary.  Instead of inserting a recovered far repeat phrase into
the priority queue $\Q$ it is appended into the appropriate queue
$\Q_i$. This requires a RAM buffer of size $B$ for each queue but as
above a multi-round distribution can be used if the number of segments
is too large. This approach may not reduce the I/O compared to the use
of a priority queue but it does reduce computation. Moreover, the
simple queue allows the strings $s$ to be of variable sizes and of
unlimited length; thus there is no need to split the phrases except at
segment boundaries.

Since the queues $\Q_i$ are not ordered by the phrase position, we can
no more recover a segment in a strict left-to-right order, which
requires a modification of the segment recovery procedure.  The
sequence $\R_{\mathrm{near}}$ of near repeat phrases is divided into
two: $\R_{\mathrm{prev}}$ contains the phrases with the source in the
preceding segment and $\R_{\mathrm{same}}$ the ones with the source in
the same segment.

As in the previous section, the recovery of a segment $\X_j$ starts
with the previous segment in the array $\Y[0..b)$ and consists of the
following steps:
\begin{enumerate}
\item Recover the phrases in $\R_{\mathrm{prev}}$ (that are in this
  segment). Note that each source is in the part of the previous
  segment that is still untouched.
\item Recover the literal phrases by reading them from $\LP$.
\item Recover the far repeat phrases by reading them from $\Q_j$ (with
  the full literal representation).
\item Recover the phrases in $\R_{\mathrm{same}}$. Note that each
  source is in the part of the current segment that has been fully
  recovered.
\end{enumerate}
After the recovery of the segment, we read all the phrases in $\R_j$
and insert them into the queues $\Q_k$ with their full literal
representations.

We want to minimize the number of segments. Thus we choose the
segments size to occupy at least half of the available RAM and more if
the RAM buffers for the queues $\Q_k$ do not require all of the other
half.  It is easy to see that this algorithm does not generate
asymptotically more I/Os than the algorithm of the previous
section. Thus the I/O complexity is $\Oh\left( \frac{n}{B\log_\sigma
    n} \log_{M/B}\frac{n}{B\log_\sigma n} \right)$.  We have
implemented the algorithm using standard file I/O (without the help of
STXXL).

\section{Reducing disk space usage}
\label{sec:reducing-disk-space-usage}

The algorithm described in the previous section can adapt to a small
RAM by using short segments, and if necessary, multiple rounds of
distribution.  However, reducing the segment size does not affect the
disk space usage and the algorithm will fail if it does not have
enough disk space to store all the external memory data. In this
section, we describe how the disk space usage can be reduced.

The idea is to divide the LZ factorization into parts and to process
one part at a time recovering the corresponding part of the text.  The
first part is processed with the algorithm of the previous section as
if it was the full string. To process the later parts, a slightly
modified algorithm is needed because, although all the phrases are in
the current part, the sources can be in the earlier parts. Thus we
will have the $\R_j$ queues for all the segments in the current and
earlier parts but the $\Q_j$ queues only for the current part.  The
algorithm processes first all segments in the previous parts
performing the following steps for each segment $\X_j$:
\begin{itemize}
\item Read $\X_j$ from disk to RAM.
\item Read $\R_j$ and for each phrase in $\R_j$ create the triple
  $(q,\ell,s)$ and write it to the appropriate queue $\Q_k$.
\end{itemize}
Then the segments of the current part are processed as described in
the previous section.

For each part, the algorithm reads all segments in the preceding
parts. The number of additional I/Os needed for this is
$\Oh(np/(B\log_\sigma n))$, where $p$ is the number of parts.  In
other respects, the performance of the algorithm remains essentially
the same.

We have implemented this partwise processing algorithm using greedy
on-line partitioning. That is, we make each part as large as possible
so that the peak disk usage does not exceed a given disk space
budget. An estimated peak disk usage is maintained while reading the
input.  The implementation needs at least enough disk space to store
the input (the factorization) and the output (the recovered string)
but the disk space needed in addition to that can usually be reduced
to a small fraction of the total with just a few parts.

\section{Experimental Results} \label{sec:experiments}

\begin{table*}[bt]
  \centering {
    \begin{tabular}[tab:space-basic]{l@{\hspace{2.3em}}l@{\hspace{2.3em}}r}
      \hline
      Name     & $\sigma$ & $n/z$ \\
      \hline
      hg.reads & 6     & 52.81   \\
      wiki     & 213   & 84.26   \\
      kernel   & 229   & 7767.05 \\
      random255   & 255   & 4.10    \\
      \hline
    \end{tabular} }\vspace{1ex}
  \caption[lof]{Statistics of data used in the experiments. All files are of size 256\,GiB.
    The value of $n/z$ (the average length of a phrase in the LZ77 factorization) is
    included as a measure of repetitiveness.}
  \label{tab:files}
\end{table*}

\paragraph{Setup.} We performed experiments on a machine equipped with
two six-core 1.9\,GHz Intel Xeon E5-2420 CPUs with 15\,MiB L3 cache
and 120\,GiB of DDR3 RAM.  The machine had 7.2\,TiB of disk space
striped with RAID0 across four identical local disks achieving a
(combined) transfer rate of about 480\,MiB/s.  The STXXL block size as
well as the size of buffers in the algorithm based on plain disk I/O
was set to 1\,MiB.

The OS was Linux (Ubuntu 12.04, 64bit) running kernel 3.13.0.  All
programs were compiled using {\tt g++} version 4.7.3 with {\tt-O3}
{\tt-DNDEBUG} options.  The machine had no other significant CPU tasks
running and only a single thread of execution was used for
computation.  All reported runtimes are wallclock (real) times.

\paragraph{Datasets.} For the experiments we used the following files
varying in the number of repetitions and alphabet size (see
Table~\ref{tab:files} for some statistics): \vspace{-0.2cm}
\begin{itemize}
\item hg.reads: a collection of DNA reads (short fragments produced by
  a sequencing machine) from 40 human
  genomes\footnote{\url{http://www.1000genomes.org/}} filtered from
  symbols other than $\{{\tt A}, {\tt C}, {\tt G}, {\tt T}, {\tt N}\}$
  and newline;
\item wiki: a concatenation of three different English Wikipedia
  dumps\footnote{\url{http://dumps.wikimedia.org/}} in XML format
  dated: 2014-07-07, 2014-12-08, and 2015-07-02;
\item kernel: a concatenation of $\sim$16.8 million source files from
  510 versions of Linux kernel
  \footnote{\url{http://www.kernel.org/}};
\item random255: a randomly generated sequence of bytes.
\end{itemize}

\begin{figure}[t]
  \minipage{0.53\textwidth}
  \includegraphics[trim = 0mm 15mm 0mm 0mm, clip,
  width=\linewidth]{./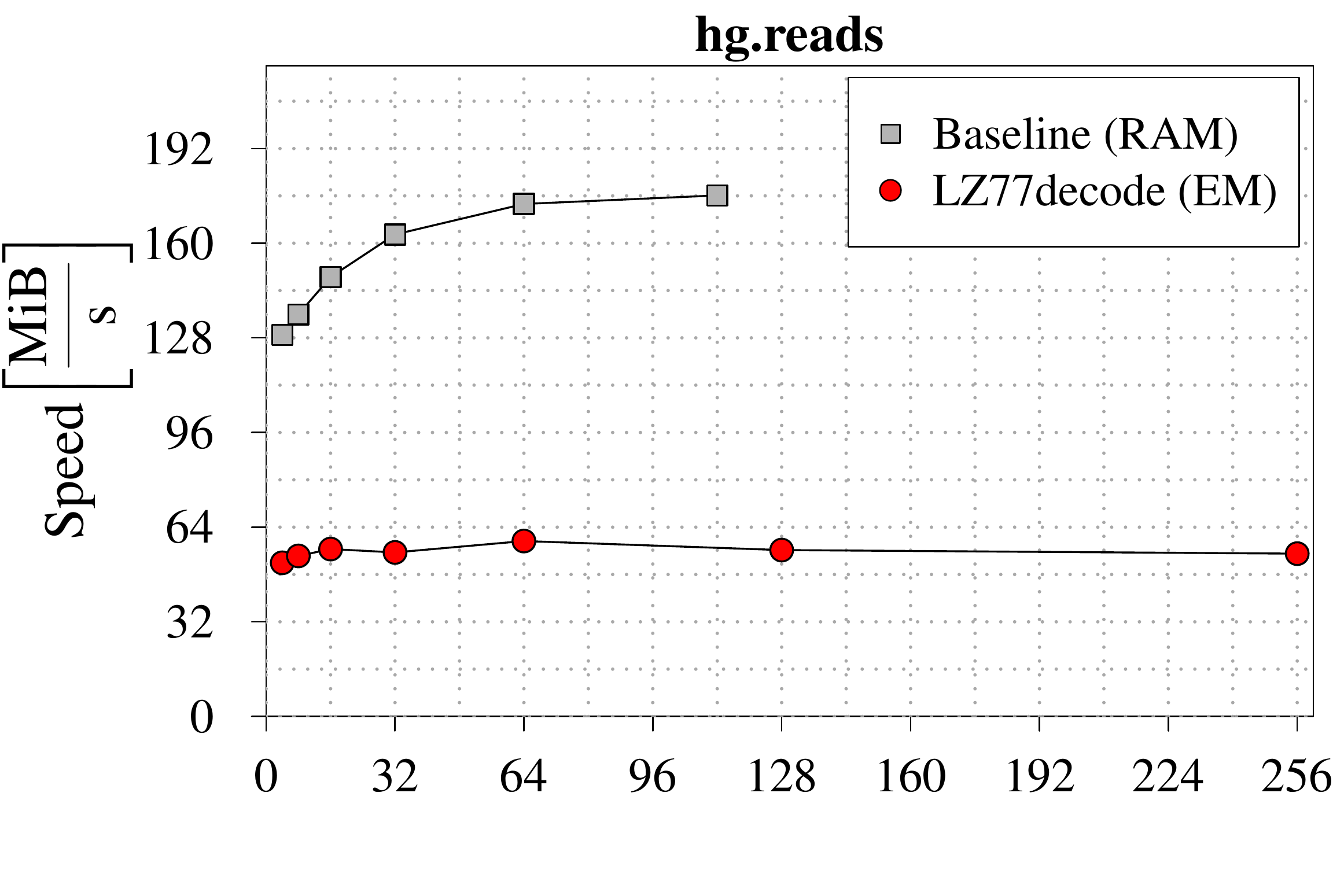}
  \endminipage
  \hspace{-0.06\textwidth} \minipage{0.53\textwidth}
  \includegraphics[trim = 0mm 15mm 0mm 0mm, clip,
  width=\linewidth]{./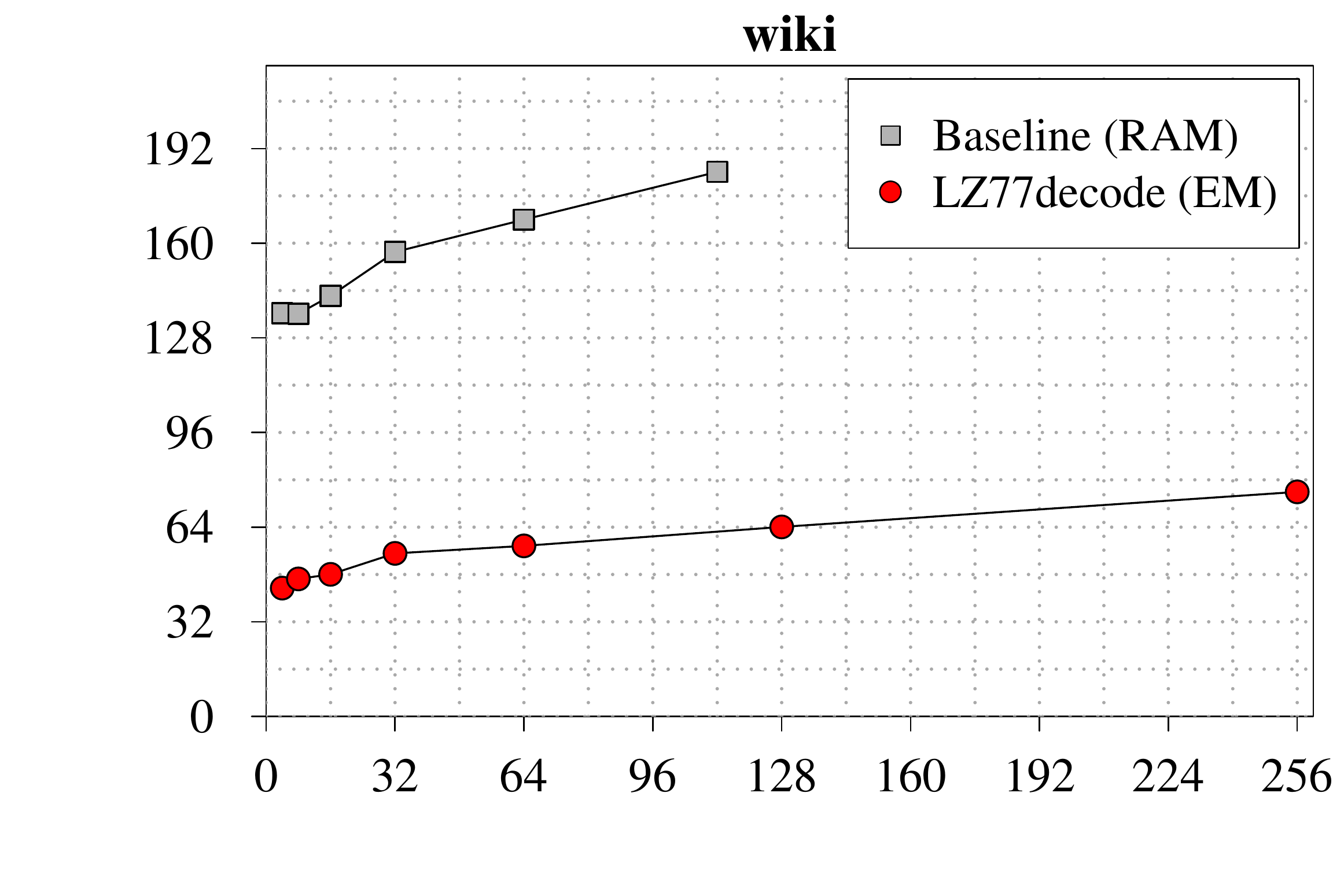}
  \endminipage
  \newline \minipage{0.53\textwidth}
  \includegraphics[trim = 0mm 0mm 0mm 0mm,
  width=\linewidth]{./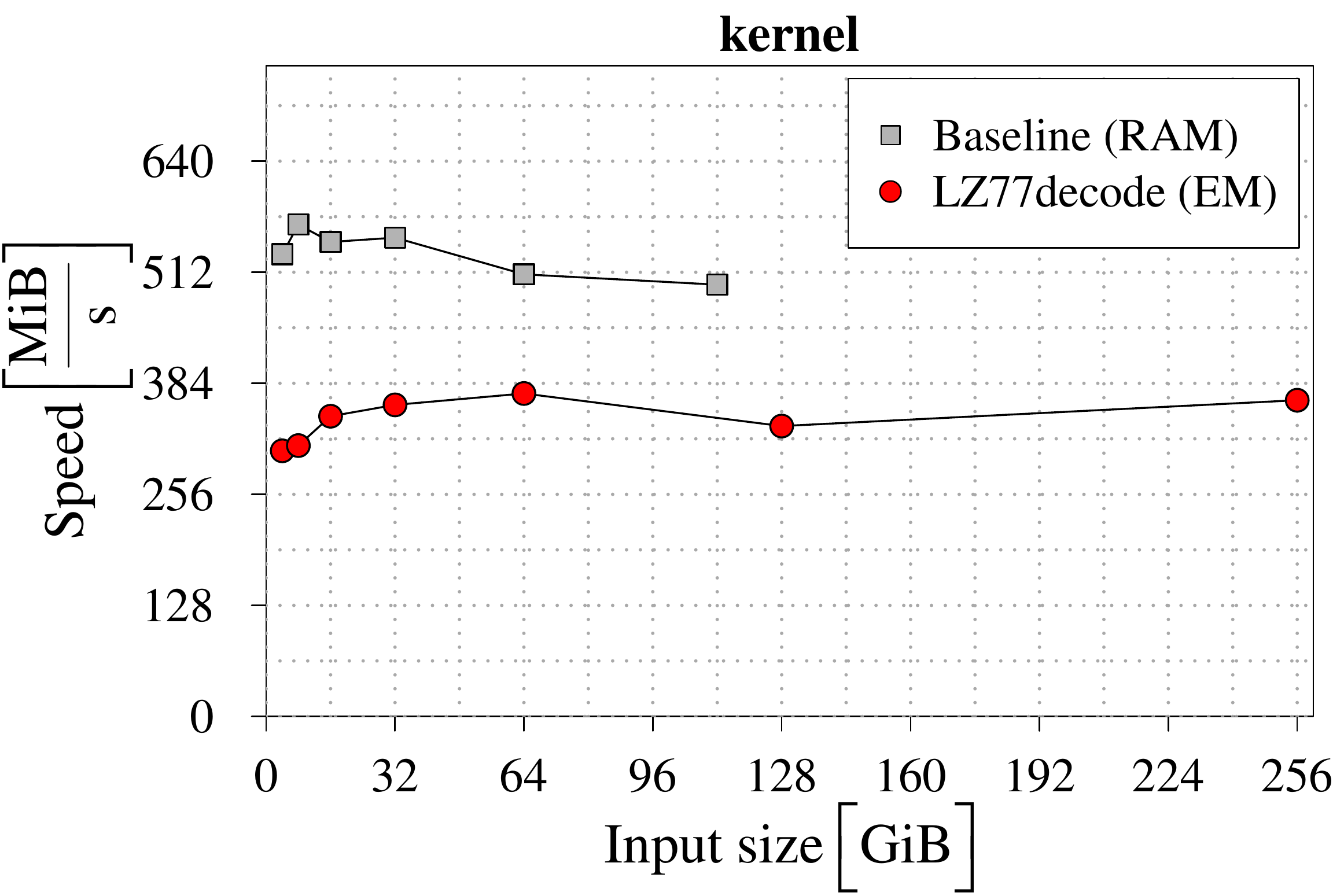}
  \endminipage
  \hspace{-0.06\textwidth} \minipage{0.53\textwidth}
  \includegraphics[trim = 0mm 0mm 0mm 0mm,
  width=\linewidth]{./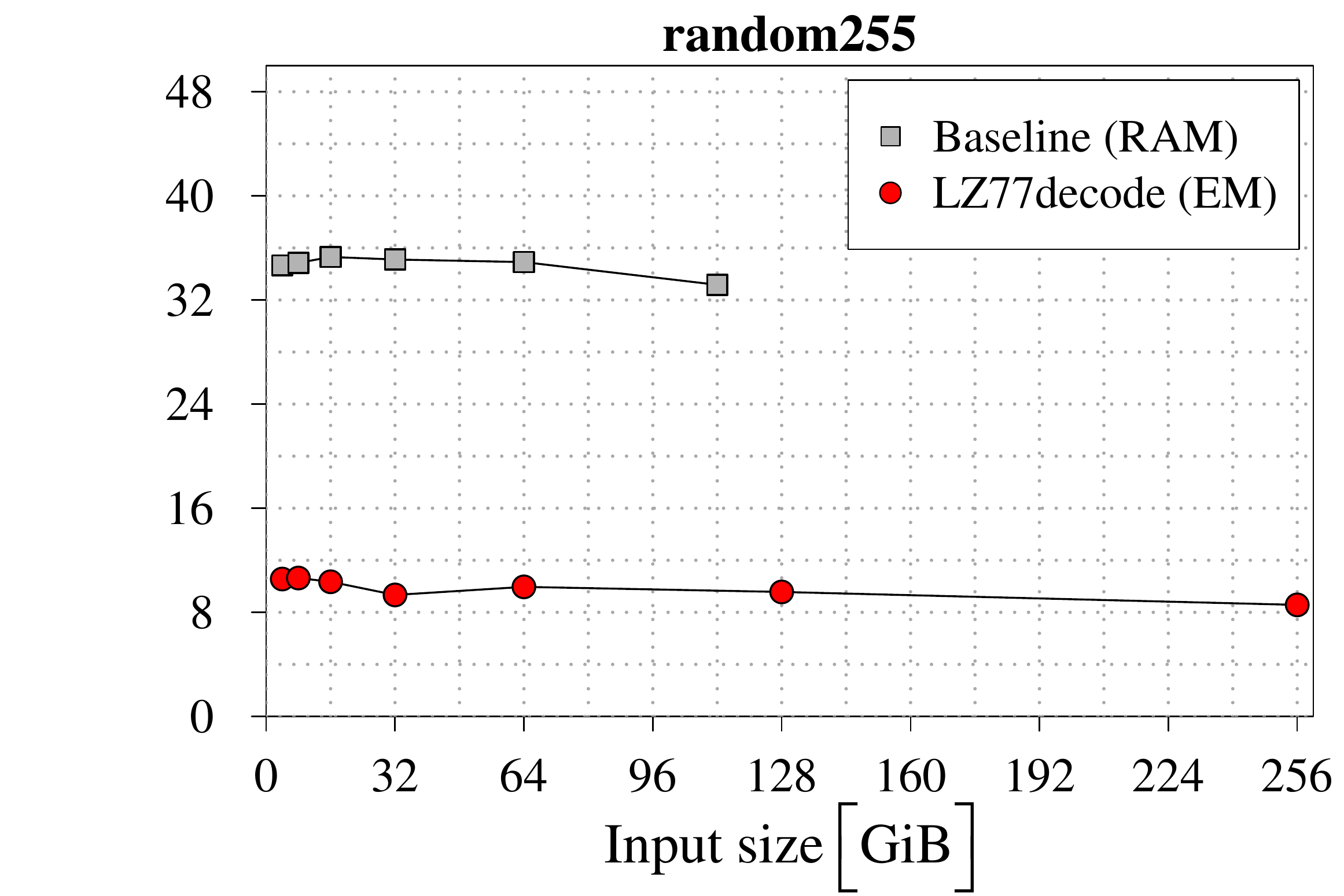}
  \endminipage
  \caption{Comparison of the new external memory LZ77 decoding
    algorithm based on plain disk I/O (``LZ77decode'') with the purely
    in-RAM decoding algorithm (``Baseline''). The latter represents an
    upper bound on the speed of LZ77 decoding. The unit of decoding
    speed is MiB of output text decoded per second.}
  \vspace{-2.5ex}
  \label{fig:scalability}
\end{figure}

\paragraph{Experiments.}
In the first experiment we compare the implementation of the new LZ77
decoding algorithm not using external-memory sorting or priority queue
to a straightforward internal-memory LZ77 decoding algorithm that
scans the input parsing from disk and decodes the text from left to
right.  All copying of text from sources to phrases happens in RAM.

We use the latter algorithm as a baseline since it represents a
realistic upper bound on the speed of LZ77 decoding. It needs enough
RAM to accommodate the output text as a whole, and thus we were only
able to process prefixes of test files up to size of about
120\,GiB. In the runtime we include the time it takes to read the
parsing from disk (we stream the parsing using a small buffer) and
write the output text to disk. The new algorithm, being fully
external-memory algorithm, can handle full test instances. The RAM
usage of the new algorithm was limited to 3.5\,GiB.

The results are presented in Fig.~\ref{fig:scalability}. In nearly all
cases the new algorithm is about three times slower than the
baseline. This is due to the fact that in the external memory
algorithm each text symbol in a far repaeat phrase is read or written
to disk three times: first, when written to a queue $\Q_j$ as a part
of a recovered phrase, second, when read from $\Q_j$, and third, when
we write the decoded text to disk. In comparison, the baseline
algorithm transfers each text symbol between RAM and disk once: when
the decoded text is written to disk.  Similarly, while the baseline
algorithm usually needs one cache miss to copy the phrase from the
source, the external memory algorithm performs about three cache
misses per phrase: when adding the source of a phrase to $\R_j$, when
adding a literal representation of a phrase into $\Q_j$, and when
copying the symbols from $\Q_j$ into their correct position in the
text.  The exception of the above behavior is the highly repetitive
kernel testfile that contains many near repeat phrases, which are
processed as efficiently as phrases in the RAM decoding algorithm.

\begin{figure}[t!]
  \minipage{0.525\textwidth}
  \includegraphics[width=\linewidth]{./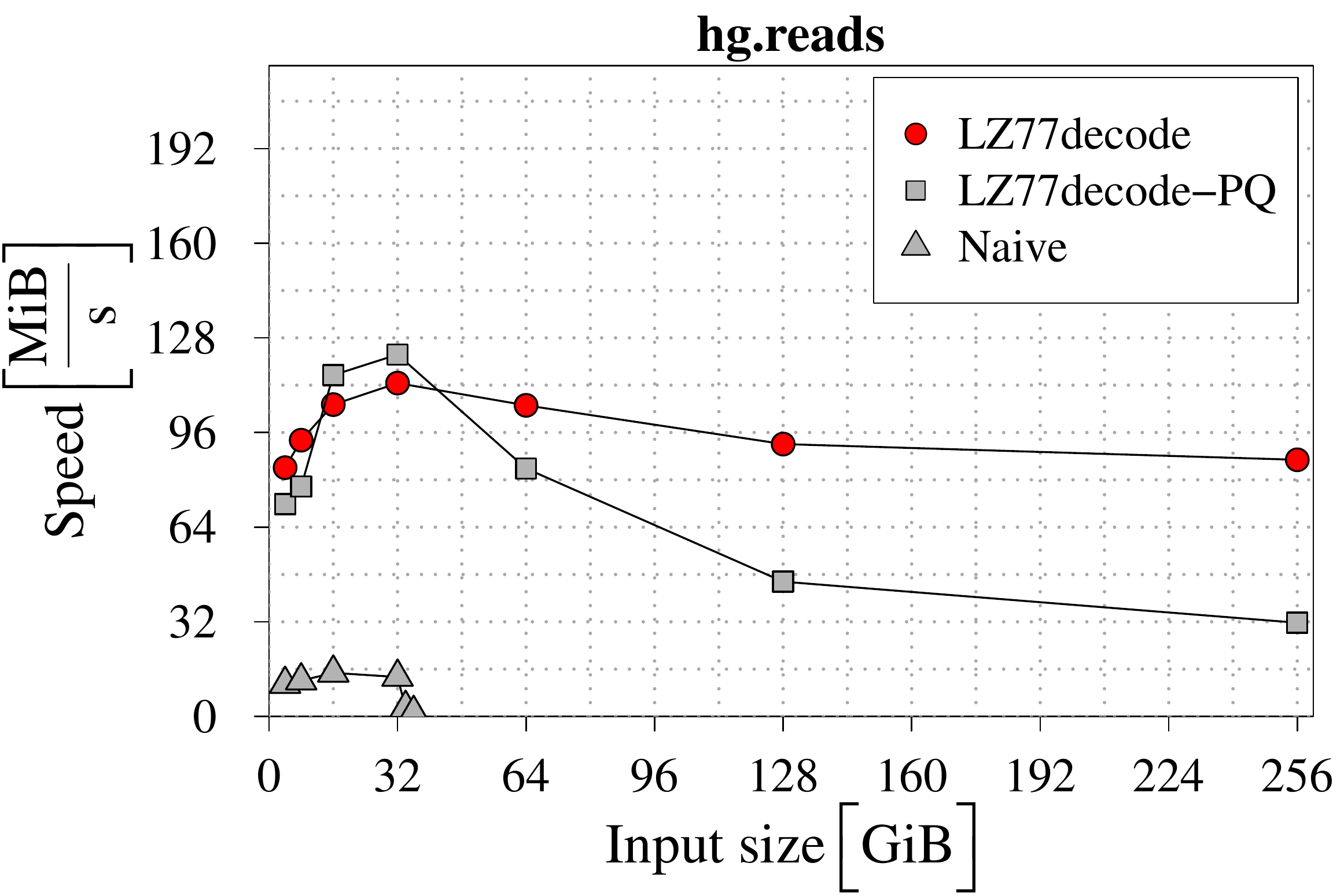}
  \endminipage
  \hspace{-0.05\textwidth} \minipage{0.525\textwidth}
  \includegraphics[width=\linewidth]{./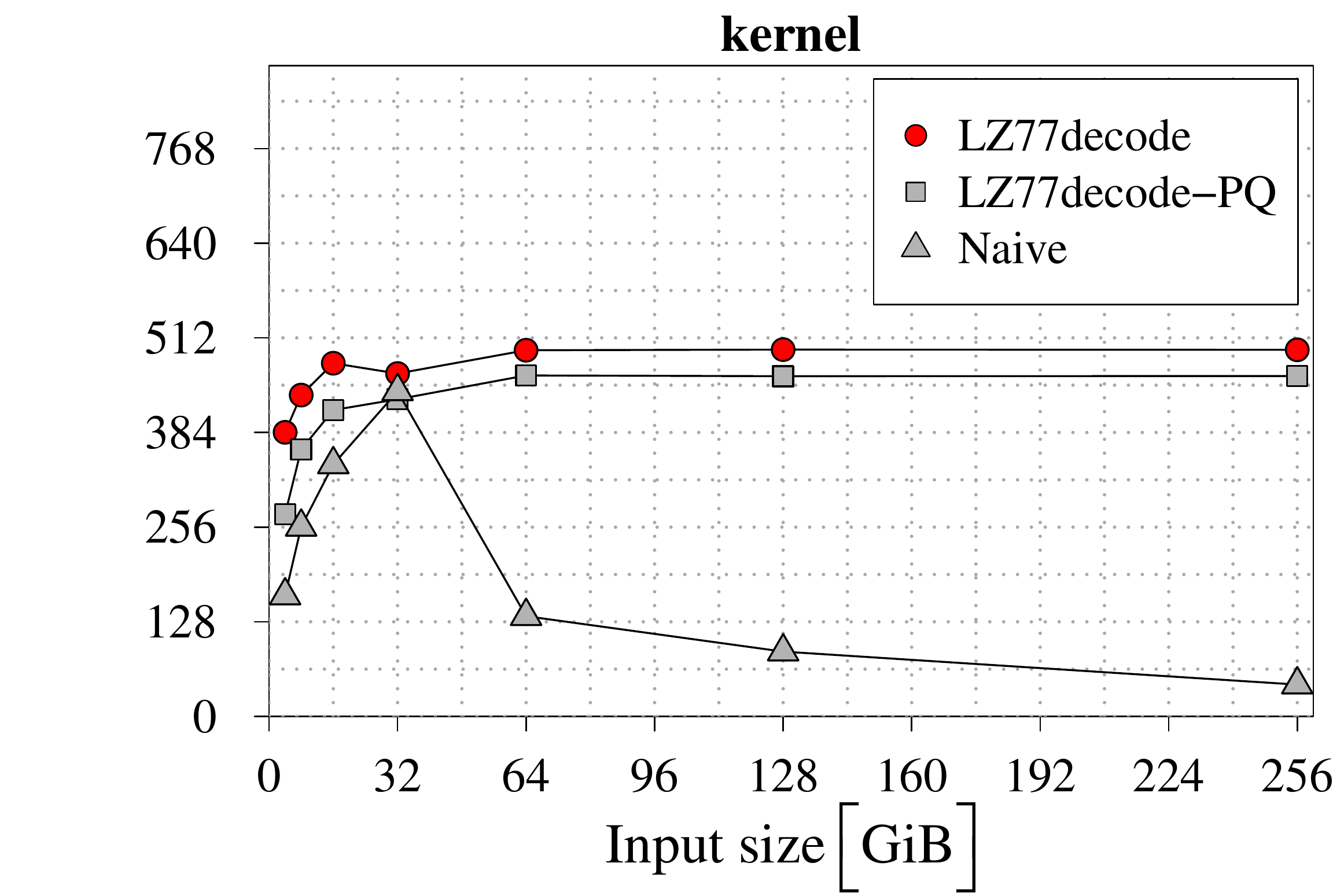}
  \endminipage
  \caption{Comparison of the new external memory LZ77 decoding
    algorithm based on plain disk I/O (``LZ77decode'') to the
    algorithm implemented using external memory sorting and priority
    queue (``LZ77decode-PQ''). The comparison also includes the
    algorithm implementing naive approach to LZ77 decoding in external
    memory. The speed is given in MiB of output text decoded per
    second.}
  \vspace{-2.5ex}
  \label{fig:versus-pq-and-naive}
\end{figure}

In the second experiment we compare our two algorithms described in
Section~\ref{sec:decode-using-sort-and-pq}
and~\ref{sec:decode-using-plain-disk-io} to each other. For the
algorithm based on priority queue we set $\ell_{\max}=16$.  The
segment size in both algorithms was set to at least half of the
available RAM (and even more if it did not lead to multiple rounds of
EM sorting/distribution), except in the algorithm based on sorting we
also need to allocate some RAM for the internal operations of STXXL
priority queue. In all instances we allocate 1\,GiB for the priority
queue (we did not observe a notable effect on performance from using
more space).

In the comparison we also include a naive external-memory decoding
algorithm that works essentially the same as baseline RAM algorithm
except we do not require that RAM is big enough to hold the
text. Whenever the algorithm requests a symbol outside a window, that
symbol is accessed from disk. We do not explicitly maintain a window
of recently decoded text in RAM, and instead achieve a very similar
effect by letting the operating system cache the recently accessed
disk pages.  To better visualize the differences in performance, all
algorithms were allowed to use 32\,GiB of RAM.

The results are given in Fig.~\ref{fig:versus-pq-and-naive}. For
highly repetitive input (kernel) there is little difference between
the new algorithms, as they both copy nearly all symbols from the
window of recently decoded text.  The naive algorithm performs much
worse, but still finishes in reasonable time due to large average
length of phrases (see Table~\ref{tab:files}).

\begin{figure}[t!]
  \minipage{0.53\textwidth}
  \includegraphics[width=\linewidth]{./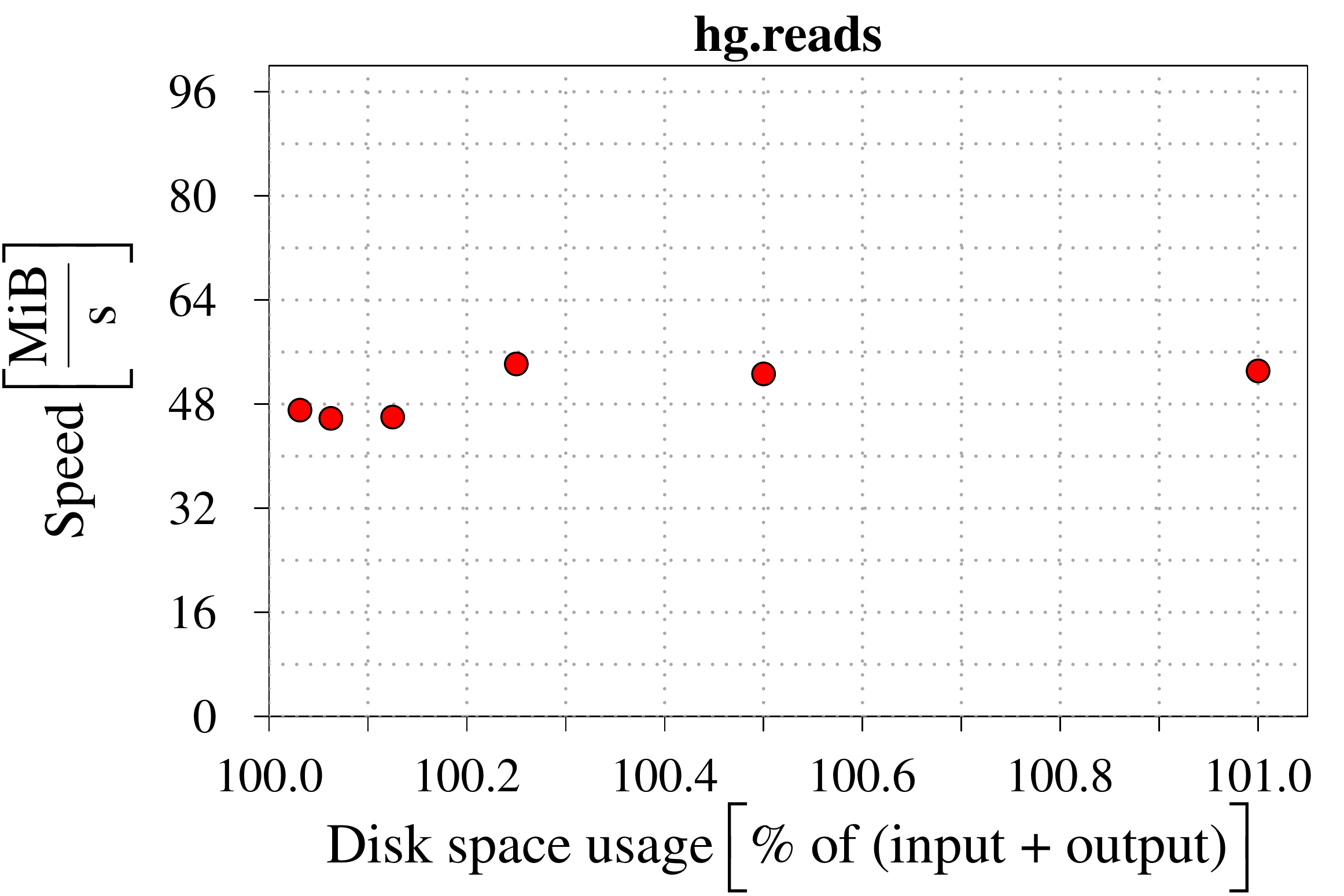}
  \endminipage
  \hspace{-0.065\textwidth} \minipage{0.53\textwidth}
  \includegraphics[width=\linewidth]{./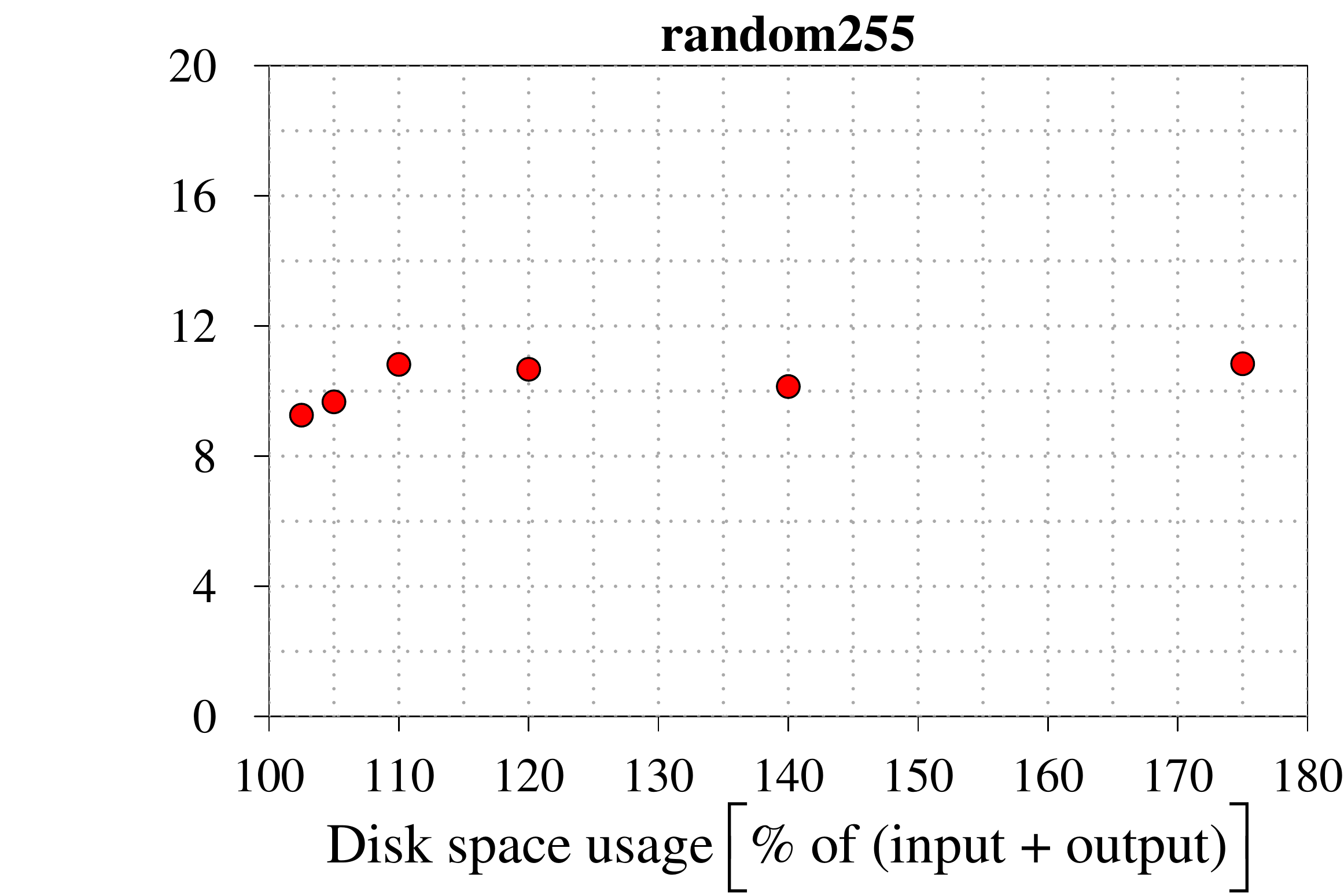}
  \endminipage
  \caption{The effect of disk space budget (see
    Section~\ref{sec:reducing-disk-space-usage}) on the speed of the
    new external-memory LZ77 decoding algorithm using plain disk I/O.
    Both testfiles were limited to 32\,GiB prefixes and the algorithm
    was allowed to use 3.5\,GiB of RAM. The rightmost data-point on
    each of the graphs represents a disk space budget sufficient to
    perform the decoding in one part.  \vspace{-2ex} }
  \label{fig:partial-processing}
\end{figure}

On the non-repetitive data (hg.reads), the algorithm using
external-memory sorting and priority queue clearly gets slower than
the algorithm using plain disk I/O as the size of input grows. The
difference in constant factors is nearly three for the largest test
instance. The naive algorithm maintains acceptable speed only up to a
point where the decoded text is larger than available RAM. At this
point random accesses to disk dramatically slow down the algorithm.

In the third experiment we explore the effect of the technique
described in Section~\ref{sec:reducing-disk-space-usage} aiming at
reducing the peak disk space usage of the new algorithm. We executed
the algorithm on 32\,GiB prefixes of two testfiles using 3.5\,GiB of
RAM and with varying disk space budgets.  As shown in
Fig.~\ref{fig:partial-processing}, this technique allows reducing the
peak disk space usage to very little over what is necessary to store
the input parsing and output text and does not have a significant
effect on the runtime of the algorithm, even on the incompressible
random data.

\section{Concluding Remarks}

We have described the first algorithms for external memory LZ77
decoding. Our experimental results show that LZ77 decoding is fast in
external memory setting too. The state-of-the-art external memory LZ
factorization algorithms are more than a magnitude slower than our
fastest decoding algorithm, see~\cite{KKP2014-EMLZ}.

\bibliographystyle{splncs03}
\bibliography{lzdec}

\end{document}